\documentclass{article}
\usepackage[T1]{fontenc}
\usepackage[utf8]{inputenc}
\usepackage[english]{babel}
\usepackage{amsmath,amssymb, amsthm}

\newtheorem{theorem}{Theorem}

\newtheorem{lemma}{Lemma}

\newtheorem{proposition}{Proposition}

\title{Inapproximability of sufficient reasons for decision trees}

\author{Alexander Kozachinskiy$^{1,2,3}$}
\date{\small %
    $^1$Centro Nacional de Inteligencia Artificial, Chile\\%
    $^2$Instituto de Ingenier\'ia Matem\'atica y Computacional, Universidad Cat\'olica de Chile\\
    $^3$Instituto Milenio Fundamentos de los Datos, Chile}

\begin{document}

\maketitle

\begin{abstract}
    In this note, we establish  the hardness of approximation  of the problem of computing the  minimal size of a  $\delta$-sufficient reason for  decision trees.
\end{abstract}

\section{Introduction}

 Given input to an AI model, one might want to understand, why the model classifies this input  the way it does. \emph{Local explanations} like this have been extensively studied in recent years~\cite{shih2018symbolic,ribeiro2018anchors,lundberg2017unified,yan2021if}. One kind of them, called \emph{sufficient} reasons, are in the focus of this note.

For simplicity, by AI models we understand Boolean functions $\mathcal{M}\colon\{0,1\}^n\to\{0, 1\}$, thus restricting ourselves to binary classification of vectors of binary features. There are many ways $\mathcal{M}$ can be given, e.g., as a decision tree, or as a neural network, or even as a Boolean circuit.

A \emph{sufficient reason}~\cite{shih2018symbolic,ignatiev2021sat}
for an input $x = (x_1, \ldots, x_n)\in\{0, 1\}^n$ under the model $\mathcal{M}\colon\{0,1\}^n\to\{0,1\}$ is a set $S\subseteq\{1, \ldots, n\}$ of coordinates such that for every $x^\prime\in \{0,1\}^n$ that coincides with $x$ in all coordinates from $S$ it holds that $\mathcal{M}(x^\prime) = \mathcal{M}(x)$. In other words, a sufficient reason for $x$ is a set of its features such that all inputs with these features are classified as $x$.

Barcel{\'o} et al.~\cite{barcelo2020model} initiated the study of the \emph{computational complexity} of local explanations for AI models. In particular, they introduced the \emph{minimum sufficient reason} problem. This problem is: given an input $x\in\{0, 1\}^n$ and a model $\mathcal{M}\colon\{0, 1\}^n\to\{0,1 \}$, find a sufficient reason $S$ for $x$ under $\mathcal{M}$ of minimal size. They showed that even when $\mathcal{M}$ is given as a decision tree, this problem is NP-hard (and its decision version is NP-complete).

Recently, it was noticed~\cite{arenas2022computing,waldchen2021computational} that sufficient reasons as explanations are too restrictive, and are usually too big in practice. We might be happy with a smaller set of features which guaranties the same classification as for our input $x$ for most of the cases (possibly, not always). This leads to the following relaxation of the notion of a sufficient reason. Namely, take any $\delta\in [0, 1]$. A $\delta$-sufficient reason for $x\in\{0, 1\}^n$ under $\mathcal{M}\colon\{0,1\}^n\to\{0, 1\}$ is a set $S\subseteq \{1, 2, \ldots, n\}$ of coordinates for which the following holds: if we consider the set of all $x^\prime\in\{0, 1\}^n$ that coincide with $x$ in all coordinates from $S$, then for at least  a fraction $\delta$ of these $x^\prime$ we have $\mathcal{M}(x^\prime) = \mathcal{M}(x)$. 

Just as with ``exact'' sufficient reasons, one can consider the minimum $\delta$-sufficient reason problem. In this problem, given an input $x\in\{0, 1\}^n$ and a model $\mathcal{M}\colon\{0, 1\}^n\to\{0, 1\}$, our task is to find a $\delta$-sufficient reason $S$ for $x$ under $\mathcal{M}$ of minimal size. We assume that $\delta$ is a parameter of the problem and not a part of the input (thus, for different $\delta\in[0, 1]$ we get different algorithmic problems).

W{\"a}ldchen et al.~\cite{waldchen2021computational} show that when $\mathcal{M}$ is given as a Boolean circuit, the decision version of the minimum $\delta$-sufficient problem is $NP^{PP}$-complete\footnote{Thus, when $\mathcal{M}$ is given by a circuit, the problem even jumps out of $NP$ (unless, of course, $NP = NP^{PP}$). The reason is that verifying if a set of coordinates is  a $\delta$-sufficient reason requires checking, whether at least a fraction $\delta$ of inputs to some circuit are satisfying.}, for any $\delta\in [\frac 1 2, 1)$. In turn, Arenas et al.~\cite{arenas2022computing} show that the decision version of the minimum $\delta$-sufficient reason problem is NP-complete for every $\delta\in (0, 1)$ when $\mathcal{M}$ is given as a decision tree.

Given that these problems are hard to solve exactly, it makes sense to investigate their hardness of approximation. W{\"a}ldchen et al.~\cite{waldchen2021computational} establish strong hardness of approximation for the minimum $\delta$-sufficient reason problem when $\mathcal{M}$ is given by a Boolean circuit. Namely, they show that it is NP-hard to distinguish the case when there exists a ``very good'' and ``very small'' sufficient reason from the case when all ``mildly good'' sufficient reasons include  ``almost all'' coordinates. More specifically, for every $\varepsilon > 0$, they show that it is NP-hard to distinguish the case when there exists a $(1 - \varepsilon)$-sufficient reason of size at most $n^\varepsilon$ from the case when there is no $\varepsilon$-sufficient reason of size less then $n - n^\varepsilon$. Here $n$ denotes the number of variables. 

We show that the same hardness of approximation holds even when $\mathcal{M}$ is given by a decision tree. This improves upon the result of W{\"a}ldchen et al.~because every decision tree can be converted in polynomial time into a Boolean circuit, computing the same function (thus, the problem for decision trees can only be easier than for Boolean circuits). However, our complexity assumption is slightly stronger than $P\neq NP$.

\begin{theorem}
\label{minimum}
Unless SAT can be solved in quasi-polynomial time, for every $\varepsilon > 0$ there exists no polynomial-time algorithm that, given a decision tree $T$ over $n$ variables and an input $x\in\{0,1\}^n$ to it, distinguishes between the following two cases:
\begin{itemize}
\item there exists a $(1 - \varepsilon)$-sufficient reason for $x$ under $T$ of size at most $n^{\varepsilon}$
\item  there exists no $\varepsilon$-sufficient reason for $x$ under $T$ of size less than $n - n^{\varepsilon}$.
\end{itemize}
\end{theorem}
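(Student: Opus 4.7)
The plan is to reduce from SAT to the gap decision problem in the statement, exploiting the quasi-polynomial-time slack to allow the reduction itself to run in time $2^{\mathrm{polylog}(s)}$ on SAT instances of size $s$. Concretely, given a SAT instance $\phi$ of size $s$, I aim to build a decision tree $T$ on $n$ variables (with $n = \mathrm{quasipoly}(s)$, so that $n^{\varepsilon} \geq s^{\Omega(1)}$ is big enough to carry combinatorial information about $\phi$) together with a distinguished input $x$ such that $\phi\in\mathrm{SAT}$ iff $x$ admits a $(1-\varepsilon)$-sufficient reason of size at most $n^{\varepsilon}$, while $\phi\notin\mathrm{SAT}$ iff $x$ admits no $\varepsilon$-sufficient reason of size below $n - n^{\varepsilon}$. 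A polynomial-time algorithm distinguishing the two cases then yields a $2^{\mathrm{polylog}(s)}$-time SAT algorithm.

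The first natural move is to recycle the circuit-based hardness of W\"aldchen et al. Their construction already produces an $n_C$-variable circuit $C$ exhibiting the required gap, and any such circuit can be unfolded in time $2^{O(n_C)}$ into a decision tree computing the same function. If this unfolding is to remain quasi-polynomial in $s$, we need $n_C = \mathrm{polylog}(s)$; this forces us to run the W\"aldchen et al.~reduction on SAT instances whose size is polylogarithmic in the final parameter $n$. Under the hypothesis that SAT is not in quasi-polynomial time, such instances are still hard, and the $(n_C^{\varepsilon}, n_C - n_C^{\varepsilon})$ gap on the circuit transfers verbatim to the resulting decision tree since both objects compute the same Boolean function on the same input variables.

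The main obstacle I expect is whether the circuit of W\"aldchen et al.~can actually be obtained with so few variables, or whether their construction intrinsically needs $n_C = \mathrm{poly}(s)$; in the latter case the unfolding blows up to $2^{\mathrm{poly}(s)}$, breaking the reduction. The contingency plan is a decision-tree-native gap construction, avoiding the detour through circuits. The idea is to introduce a small block of ``control'' variables that selects among many copies of a gadget encoding local pieces of $\phi$, arranged so that (i) in the YES case, committing the controls together with one satisfying assignment yields a sufficient reason of the right size, because the remaining free variables affect $T$'s output only through the inactive gadgets; and (ii) in the NO case, a counting argument establishes that any set of fewer than $n - n^{\varepsilon}$ fixed coordinates leaves enough free coordinates for the random completion to miss the label $T(x)$ at least a $(1-\varepsilon)$-fraction of the time. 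Calibrating the widths of the control block and the gadgets so that the two gap numbers $n^{\varepsilon}$ and $n - n^{\varepsilon}$ line up for every $\varepsilon>0$ is the delicate part, and is likely where most of the technical work lies.
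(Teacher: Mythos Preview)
Your primary route cannot work, and the obstruction is structural rather than a quirk of the particular W\"aldchen et al.\ construction. Their circuit built from a CNF with $v$ variables has at least $v$ input bits; under the hypothesis that SAT is not in quasi-polynomial time, hard instances cannot have $v = \mathrm{polylog}(s)$ (else brute force over assignments already runs in $2^{\mathrm{polylog}(s)}$), so unfolding the circuit into a decision tree costs $2^{v} = 2^{s^{\Omega(1)}}$, not $2^{\mathrm{polylog}(s)}$. You flag this possibility yourself, but the contingency plan you sketch is both too vague to carry a proof and points in the wrong direction at the one place it becomes concrete: you propose that in the YES case one should ``commit the controls together with one satisfying assignment,'' whereas in the construction that actually works the control variables are precisely the ones left \emph{unfixed}, and the assignment is encoded only through which data variables are set to~$1$.

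The paper proceeds via an intermediate problem with a crucial asymmetry you do not mention: given a decision tree $T$ on $N$ variables, distinguish whether some $y \in \{1,\bot\}^N$ (coordinates may be fixed to $1$ or left free, but \emph{never} to $0$) achieves $T(y) \ge 1-\kappa$, versus no such $y$ achieving $T(y) \ge \kappa$. This ``fix-to-$1$-only'' constraint is why the reduction is from $1$-in-$k$ exact hitting set (via Guruswami--Trevisan) rather than from SAT: a satisfying assignment has exactly one $1$ per clause, which becomes exactly one \emph{undefined} variable in the corresponding clause gadget $L_C$, yielding a clean $3/4$ acceptance versus at most $5/8$ for any other $\{1,\bot\}$-pattern. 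A block of $2l+1 = O(\log m)$ selector bits routes to a clause, giving an $O(\log m)$-depth tree $L$ with gap $7/8$ versus $7/8 - \Theta(1/\sqrt{\log m})$; running $K = \Theta(\log m)$ independent copies of $L$ and thresholding amplifies this via Hoeffding to $1-\kappa$ versus $\kappa$, and the resulting $O(\log^2 m)$ depth is exactly where the quasi-polynomial tree size enters. A separate, genuinely polynomial-time reduction then converts this intermediate gap into the sufficient-reason gap by OR-ing $T$ with a conjunction of roughly $N^{1/\varepsilon}$ fresh variables and taking $x$ to be the all-ones vector. None of these ingredients---the fix-to-$1$ asymmetry, the choice of $1$-in-E$k$HS, the Hoeffding amplification, or the final padding conjunction---appears in your outline, and the ``counting argument'' you invoke for the NO case does not by itself produce the delicate $7/8$ versus $7/8 - \delta$ separation on which everything rests.
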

The rest of the note is organized as follows.
In Section \ref{prelim}, we give preliminaries. In Section \ref{inter}, we introduce an intermediate problem, and show that its hardness implies Theorem \ref{minimum}. In the last section, we establish the hardness of the intermediate problem.

\section{Preliminaries}
\label{prelim}

A \emph{decision tree} $T$ over $n$ variables is a binary rooted tree, where
\begin{itemize}
    \item every leaf is labeled by $0$ or $1$;
    \item every inner node $v$ has exactly two children; one of them is labeled as the $0$-child of $v$, and the other one is labeled as the $1$-child of $v$.
    \item in addition, every inner node is labeled by an index $i\in\{1, \ldots, n\}$.
\end{itemize}
Given an input $x = (x_1, \ldots, x_n)\in\{0, 1\}^n$, the value of $T$ on $x$, denoted by $T(x)$, is computed as follows. Starting at the root of $T$, we descend to one of the leafs. More specifically, if we are in an inner node $v$, we take its label $i\in\{1, \ldots, n\}$, and if $x_i = 0$, we descend to the $0$-child of $v$, and if $x_i = 1$, we descend to the $1$-child of $v$. When we reach a leaf, we let $T(x)$ be the label of this leaf.

Along with inputs from $\{0, 1\}^n$ (``complete'' inputs), we will consider \emph{partial} inputs to $T$. Formally, a partial input is an element of $\{0, 1, \bot\}^n$, where $\bot$ is a special symbol meaning ``undefined''.  We say that two partial inputs $x, y \in\{0, 1, \bot\}^n$ are \emph{consistent} if there exists no $i\in \{1, \ldots, n\}$ such that $x_i \neq y_i$ and $x_i, y_i\in\{0, 1\}$. This means that $x$ and $y$ can be ``completed'' to the same complete input by changing undefined coordinates to 0 or 1.

We extend the notation $T(x)$ to partial inputs. Namely, for a  partial input $y\in\{0, 1, \bot\}^n$ we define:
\[T(y) = \frac{\Big|\{x\in\{0,1\}^n \mid \text{$x$ and $y$ are consistent and } T(x) = 1\}\Big|}{\Big|\{x\in\{0,1\}^n \mid \text{$x$ and $y$ are consistent }\}\Big|}.\]

In the denominator, we have the number of \emph{completions} of $y$, that is, the number of complete inputs that are consistent with $y$. In the numerator, we have the number of completions of $y$ that satisfy $T(x) = 1$. Hence, $T(y)$ is equal to the probability that a uniformly random completion of $y$ makes $T$ equal to 1. Observe that we can sample $\mathbf{x}\in\{0,1\}^n$ uniformly at random among completions of $y$ as follows: independently for all $i\in \{1, \ldots, n\}$, pick $\mathbf{x}_i$ uniformly at random if $y_i = \bot$ or set $\mathbf{x}_i = y_i$ otherwise.

We say that $S\subseteq \{1, 2, \ldots, n\}$ is a $\delta$-sufficient reason for $x\in\{0, 1\}^n$ under $T$ if the following holds:
\begin{equation}
\label{suff}
\frac{\Big|\{z\in\{0, 1\}^n\mid z_i = x_i \text{ for every } i\in S \text{ and } T(z) = T(x)\Big|}{\Big|\{z\in\{0, 1\}^n\mid z_i = x_i \text{ for every } i\in S\Big|}\ge \delta.
\end{equation}
Observe that this definition can also be formulated in probabilistic terms. Namely, let $\mathbf{z}$ be the following random variable: independently for all $i\in\{1, \ldots, n\}$, pick $\mathbf{z}_i\in\{0,1\}$ uniformly at random if $i\notin S$ and set $\mathbf{z}_i = x_i$ if $i\in S$. Then \eqref{suff} can be understood as  $\Pr[T(\mathbf{z})  = T(x)]\ge \delta$.

\begin{proposition}[Hoeffding's inequality, \cite{hoeffding1963probability}]
\label{hoe}
Let $X_1, X_2, \ldots, X_n$ be independent Bernoulli random variables. Denote $S_n = X_1 +\ldots + X_n$. Then for every $\delta > 0$ we have:
\[\Pr\left[\frac{S_n}{n} \ge \frac{\mathbb{E}S_n}{n} + \delta\right]\le \exp\{-2\delta^2 n\}.\]
\end{proposition}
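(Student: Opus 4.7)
The plan is to prove Hoeffding's inequality via the standard Chernoff/Cramér moment generating function method, specialized to Bernoulli random variables. First, for any $\lambda > 0$, Markov's inequality applied to the increasing function $t \mapsto e^{\lambda t}$ gives
\[\Pr\!\left[S_n - \mathbb{E}S_n \ge \delta n\right] \le e^{-\lambda \delta n}\,\mathbb{E}\!\left[e^{\lambda(S_n - \mathbb{E}S_n)}\right].\]
Since the $X_i$ are independent, the expectation on the right factors as $\prod_{i=1}^n \mathbb{E}\!\left[e^{\lambda(X_i - p_i)}\right]$, where I write $p_i = \mathbb{E}X_i = \Pr[X_i = 1]$. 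The whole argument is thus reduced to bounding the moment generating function of a single centered Bernoulli.

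The core step is to show that for each $i$,
\[\mathbb{E}\!\left[e^{\lambda(X_i - p_i)}\right] \le \exp\!\left(\lambda^2/8\right).\]
To do this, I set $f(\lambda) = \ln\!\left(p_i e^{\lambda(1-p_i)} + (1-p_i)e^{-\lambda p_i}\right)$ and show $f(\lambda) \le \lambda^2/8$ by checking that $f(0) = 0$, that $f'(0) = 0$ (an automatic consequence of centering), and that $f''(\lambda) \le 1/4$ uniformly in $\lambda$. The key calculation introduces the ``tilted probability'' $q(\lambda) = p_i e^{\lambda(1-p_i)} / \bigl(p_i e^{\lambda(1-p_i)} + (1-p_i)e^{-\lambda p_i}\bigr) \in [0,1]$; a short computation of $g'/g$ and $g''/g - (g'/g)^2$ for $g = e^f$ yields the clean identity $f''(\lambda) = q(\lambda)\bigl(1 - q(\lambda)\bigr) \le 1/4$. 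Taylor's theorem with integral remainder then gives $f(\lambda) = \int_0^\lambda (\lambda - t) f''(t)\,dt \le \lambda^2/8$, as desired.

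Putting the pieces together,
\[\Pr\!\left[S_n - \mathbb{E}S_n \ge \delta n\right] \le \exp\!\left(-\lambda \delta n + \lambda^2 n/8\right).\]
The last step is to optimize over $\lambda > 0$: the quadratic $-\lambda\delta + \lambda^2/8$ is minimized at $\lambda = 4\delta$, where it equals $-2\delta^2$. Substituting yields the bound $\exp(-2\delta^2 n)$, which is exactly the statement.

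I expect the only non-routine step to be the MGF bound itself, specifically verifying the identity $f''(\lambda) = q(1-q)$, which is a somewhat finicky but elementary differentiation; the rest (Markov, independence, optimization) is essentially bookkeeping. Note that the same proof actually gives the stronger Hoeffding MGF inequality for any random variable bounded in an interval of length $1$, which is why the constant in the exponent comes out as $2 = 1/(2 \cdot 1^2/4)$; restricting to Bernoulli's just makes the calculus of $f''$ especially transparent.
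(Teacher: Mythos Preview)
Your proof is correct and is the standard Chernoff--Cram\'er derivation of Hoeffding's inequality; the computation $f''(\lambda)=q(1-q)\le 1/4$ and the optimization at $\lambda=4\delta$ are both right. Note, however, that the paper does not actually prove this proposition: it is stated with a citation to Hoeffding's original article and used as a black box, so there is no ``paper's own proof'' to compare against. Your write-up simply supplies the classical argument that the paper omits.
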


\section{Intermediate problem}
\label{inter}
To establish 
Theorems \ref{minimum}, we introduce the following intermediate problem. Given a decision tree $T$, we want to fix some variables of $T$ to 1 such that the probability that $T$ equals 1 is maximized (assuming that unfixed variables are chosen independently uniformly at random). It is irrelevant how many variables we fix to 1, the point is that we cannot fix anything to $0$. In principle, we can fix all variables to 1, so the problem is interesting only if $T(1, 1, \ldots, 1) = 0$.

\begin{theorem}
\label{fix}
    Unless SAT can be solved in quasi-polynomial time, for every $\kappa > 0$ there exists no polynomial-time algorithm that, given a decision tree $T$ over $n$ variables, distinguishes between the following two cases:
    \begin{itemize}
        \item there exists a partial input $y\in \{1, \bot\}^n$ such that $T(y) \ge 1 - \kappa$;
        \item there exists no partial input $y\in\{1, \bot\}^n$ such that $T(y) \ge \kappa$.
    \end{itemize}
\end{theorem}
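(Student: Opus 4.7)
Plan:

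My plan is to reduce from SAT via PCP-amplified $3$-SAT. Applying, e.g., Dinur's PCP theorem in polynomial time, an input SAT instance $\varphi$ is first replaced by a $3$-CNF $\psi$ over $N$ variables and $M$ clauses such that either $\psi$ is satisfiable, or no assignment satisfies more than a $(1-\varepsilon_0)$ fraction of $\psi$'s clauses, for some absolute constant $\varepsilon_0 > 0$.

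For each variable $x_i$ of $\psi$ I would introduce two ``vote'' blocks of $L := \Theta(\log(1/\kappa))$ gadget variables, $u_{i,1},\ldots,u_{i,L}$ and $v_{i,1},\ldots,v_{i,L}$. The intended honest encoding of an assignment $\sigma$ sets every $u_{i,j}$ to $1$ when $\sigma_i = 1$, and every $v_{i,j}$ to $1$ when $\sigma_i = 0$, leaving the opposing block at $\bot$. A literal $x_i$ is declared \emph{strictly satisfied} iff all $u_{i,j}$ equal $1$ and at the same time at least one $v_{i,j}$ equals $0$, with $\neg x_i$ dual. The decision tree $T$ performs $L' := \Theta(\log(1/\kappa)/\varepsilon_0)$ independent challenges combined by AND; each challenge uses its own selector bits to pick a uniformly random clause $C$ and then checks, via a small subtree, whether $C$ is strictly satisfied. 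Each challenge subtree has size polynomial in $L$ and $M$, so the whole tree has size about $(ML)^{O(L')}$, quasi-polynomial in $|\varphi|$.

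The YES case is easy: on the honest encoding of a satisfying $\sigma$, every clause contains some literal whose positive block is fixed to all ones, while the opposing block is uniformly random and all-ones only with probability $2^{-L}$; each challenge passes with probability at least $1 - 2^{-L}$, and for $L$ large enough $T(y) \ge (1 - 2^{-L})^{L'} \ge 1 - \kappa$. For the NO case, let $p_i, q_i$ be the probabilities that the $u$- and $v$-blocks of $i$ complete to the all-ones string under $y$. The probability that literal $x_i$ is strictly satisfied equals $p_i(1-q_i)$, so the expected value of a single challenge is multi-linear in the $2N$ variables $p_1, q_1, \ldots, p_N, q_N$. Such a function attains its maximum on $[0,1]^{2N}$ at a vertex; the four vertices per coordinate correspond exactly to $\sigma_i = 1$, $\sigma_i = 0$, or an ``unset'' option that makes both literals on $x_i$ unsatisfied. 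By the gap guarantee, the value at any such vertex is at most $1 - \varepsilon_0$, so if the $L'$ challenges are conditionally independent given $y$, the AND has value at most $(1-\varepsilon_0)^{L'} \le \kappa$.

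The main obstacle is precisely that last conditional-independence claim: a naive implementation of the selectors would let an adversary fix every selector bit and single out $L'$ clauses simultaneously satisfied by some assignment $\sigma$ (which still exists in the NO case, since $\sigma$ can satisfy about $(1-\varepsilon_0)M$ clauses), and the honest encoding of that $\sigma$ would then pass every challenge with probability close to $1$, destroying the soundness. I would address this by deriving each selector bit pseudorandomly, for instance as an XOR of a large subset of vote variables, so that the bit remains uniformly random whenever \emph{any} vote variable in its subset is left at $\bot$; the only way to freeze every selector is then to fix \emph{all} vote variables, which in turn forces $p_i = q_i = 1$ for every $i$ and makes every literal strictly satisfied with probability zero. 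Reconciling this pseudorandom-selector design with the multi-linear rounding analysis, while keeping the tree within the promised quasi-polynomial size, is the main technical hurdle I expect in carrying out the proof.
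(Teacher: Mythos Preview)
Your proposal correctly names its own fatal gap and does not close it. With separate selector bits the adversary fixes them all to $1$, so every challenge tests one specific clause; encoding any assignment that satisfies those $L'$ clauses then gives $T(y)\approx 1$ in the NO case. Your proposed repair---make each selector bit an XOR of many vote bits---is incompatible with the rounding you rely on. The claim that a challenge is multilinear in $(p_1,q_1,\ldots,p_N,q_N)$ holds only because the challenge depends on the vote bits solely through the $2N$ ``block is all ones'' events; once the selector is a function of individual vote bits, \emph{which} clause is tested depends on the pattern inside each block, and the $(p_i,q_i)$ parametrisation collapses. If instead you run the multilinear argument at the level of individual bit marginals and relax $\{\tfrac12,1\}$ to $[0,1]$, the maximising vertex may set vote bits to $0$; at such a vertex the XOR selectors are frozen \emph{and} the ``some $v_{i,j}=0$'' side of strict satisfaction is available, so the vertex value can be $1$ and the bound is vacuous. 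The two halves of your plan pull in opposite directions, and nothing in the proposal reconciles them.

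For comparison, the paper sidesteps both difficulties by two different design choices. It reduces from the $1$-in-$k$ exact hitting set problem, which has no negated literals; a single original variable $x_i$ already encodes the assignment ($x_i=1$ means $\sigma_i=0$, $x_i=\bot$ means $\sigma_i=1$), so no paired $u/v$ blocks are needed. The selector is a block $y_1,\ldots,y_{2l+1}$ that indexes a clause only when the word has a strict majority of $1$s; on a minority word the tree simply outputs $1$. Hence fixing any $y_j$ to $1$ can only \emph{raise} the probability of entering the clause-checking branch, whose conditional value is at most $3/4$, so the adversary is strictly penalised for touching the selector and the optimum leaves every $y_j$ at~$\bot$. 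A direct case analysis (no multilinear rounding) then gives a base tree $L$ with gap $7/8$ versus $7/8-\Omega(1/\sqrt{\log m})$; amplification is done by $O(\log m)$ \emph{fully disjoint} copies of $L$ on fresh variables, combined by a threshold and analysed via Hoeffding, so no shared-variable correlations ever arise.
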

The proof of this result is given in the next section. We first derive Theorem \ref{minimum} from it.

\begin{proof}[Theorem \ref{fix} $\implies$ Theorem \ref{minimum}]
   For every $\varepsilon > 0$, we provide a polynomial-time reduction from the problem from Theorem \ref{fix} with $\kappa = \varepsilon/2$ to the problem from Theorem \ref{minimum} with parameter $\varepsilon$.
    Define $m = \lceil (n + \log_2(2/\varepsilon))^{1/\varepsilon}\rceil$.
   Given a decision tree $T$ over $n$ variables, consider a decision tree $T_1$ over $n + m$ variables, defined as follows:
   \[T_1(x_1, \ldots, x_{n + m}) = T(x_1, \ldots, x_n)\lor (x_{n+1}\land x_{n+2}\land \ldots \land x_{n+m}).\]
    Note that $T_1$ can be constructed in polynomial time from $T$ (to every leaf of $T$ that outputs 0 attach a tree computing the conjunction $x_{n+1}\land x_{n+2}\land \ldots \land x_{n+m}$). Let $e^k$ denote a Boolean vector, consisting of $k$ ones. It remains to establish two claims.
   \begin{itemize}
       \item If there exists a partial input $y\in\{1, \bot\}^n$ such that $T(y) \ge 1 - \kappa$, then there exists a $(1 - \varepsilon)$-sufficient reason for $e^{n+m}$ under $T_1$ of size at most $(n + m)^\varepsilon$.

       \item If there exists no partial input $y\in\{1, \bot\}^n$ such that $T(y) \ge \kappa$, then there exists no $\varepsilon$-sufficient reason for $e^{n+m}$ under $T_1$ of size less than $(n + m) - (n + m)^\varepsilon$.
   \end{itemize}
   Let us start with the first claim. First, note that $T(e^{n+m}) = 1$. Define $S = \{i\in\{1, 2, \ldots, n\}\mid y_i = 1\}$. The size of $S$ is at most $n = (n^{1/\varepsilon})^\varepsilon \le m^\varepsilon \le (n + m)^\varepsilon$. It remains to show that $S$ is a $(1 - \varepsilon)$-sufficient reason for $e^{n+m}$ under $T_1$. In other words, we have to show that $\Pr[T_1(\mathbf{z}) = 1] \ge 1 - \varepsilon$, where $\mathbf{z}$ is the following random variable: independently for all $i\in\{1, \ldots, n + m\}$, pick $\mathbf{z}_i$ uniformly at random if $i\notin S$ and set $\mathbf{z}_i = 1$ otherwise. Observe that $\mathbf{z}_1\ldots \mathbf{z}_n$ is a uniformly random completion of $y$. Hence, the probability that $T(\mathbf{z}_1\ldots \mathbf{z}_n) = 1$ is $T(y) \ge 1 - \kappa \ge 1 - \varepsilon$. It remains to notice that $T(\mathbf{z}_1\ldots \mathbf{z}_n) = 1 \implies T_1(\mathbf{z}) = 1$.

   Now, let us show the second claim. Assume that $S\subseteq \{1, \ldots, n + m\}$ is an $\varepsilon$-sufficient reason for $e^{n + m}$ under $T_1$. Our goal is to show that the size of $S$ is at least $(n + m) - (n + m)^\varepsilon$. Again, let $\mathbf{z}$ be the following random variable: independently for all $i\in\{1, \ldots, n + m\}$, pick $\mathbf{z}_i$ uniformly at random if $i\notin S$ and set $\mathbf{z}_i = e^{n +m}_i = 1$ otherwise. The fact that $S$ is an $\varepsilon$-sufficient reason for $e^{n+m}$ under $T_1$ means that $\Pr[T_1(\mathbf{z}) = 1]\ge \varepsilon$. By definition of $T_1$ and  by the union bound we have:
   \begin{equation}
       \label{union}
   \varepsilon\le\Pr[T_1(\mathbf{z}) = 1] \le \Pr[T(\mathbf{z}_1\ldots \mathbf{z}_n) = 1] + \Pr[\mathbf{z}_{n + 1}\land\ldots \land\mathbf{z}_{n+m} = 1]
   \end{equation}
   Observe that $\mathbf{z}_1\ldots \mathbf{z}_n$ is a uniformly random completion of $y\in\{1,\bot\}^n$, defined by 
   \[y_i= \begin{cases}1 & i\in S, \\ \bot & i\notin S,\end{cases} \qquad i\in\{1, \ldots, n\}.\]
   Hence, $\Pr[T(\mathbf{z}_1\ldots \mathbf{z}_n) = 1] = T(y) < \kappa = \varepsilon/2$. Thus, from \eqref{union} we get $\Pr[\mathbf{z}_{n + 1}\land\ldots \land\mathbf{z}_{n+m} = 1] > \varepsilon/2$. Since coordinates of $\mathbf{z}$ are independent, we have $\Pr[\mathbf{z}_{n + 1}\land\ldots \land\mathbf{z}_{n+m} = 1]  = \Pr[\mathbf{z}_{n + 1} = 1] \cdot \ldots \cdot \Pr[\mathbf{z}_{n + m} = 1]$. By definition, $\Pr[\mathbf{z}_{i} = 1] = 1/2$ if $i\notin S$ and  $\Pr[\mathbf{z}_{i} = 1] = 1$ if $i\in S$. This means that $\varepsilon/2 < \Pr[\mathbf{z}_{n + 1}\land\ldots \land\mathbf{z}_{n+m} = 1] = 2^{-|\{n+1, \ldots, n +m\}\setminus S|}$. This gives us $|\{n+1, \ldots, n +m\}\setminus S| < \log_2(2/\varepsilon)$. In particular, we obtain that $|S| \ge m - \log_2(2/\varepsilon)$. It remains to show that $m - \log_2(2/\varepsilon) \ge (n + m) - (n + m)^\varepsilon$, or, equivalently, $(n + m)^\varepsilon \ge n + \log_2(2/\varepsilon)$. Indeed, $(n + m)^\varepsilon \ge m^\varepsilon = \lceil (n + \log_2(2/\varepsilon))^{1/\varepsilon}\rceil^\varepsilon \ge n + \log_2(2/\varepsilon)$.
\end{proof}

Implicitly, W{\"a}ldchen et al.~\cite{waldchen2021computational} establish Theorem \ref{fix} when $T$ is a Boolean circuit. Then the same reduction as above establishes their hardness-of-approximation result for Boolean circuits.

Here is a sketch of their argument. The reduction is from the satisfiability problem for CNFs. Given a CNF $C$, we replace every its variable by a sufficiently long conjuction of fresh variables. Let the resulting circuit be $T$. Assume that $T$ is over $N$ variables. If $C$ is unsatisfiable, then $T(y) = 0$ for every $y\in\{1,\bot\}^N$. Assume now  that $C$ is satisfiable. We take any satisfying assignment $\alpha$ of $C$ and convert it into $y\in\{1,\bot\}^N$ such that $T(y)$ is very close to 1. Namely, if a variable of $C$ is set to 1 in $\alpha$, we fix all variables in the corresponding conjuction to 1s. Otherwise, we leave all variables in this conjuction undefined. With high probability, the value of each conjuction will be equal to the value of the corresponding variable in  $\alpha$. Hence, with high probability, the value of $T$ on a random completion of $y$ will be equal to $C(\alpha) = 1$.

We know present an argument for decision trees. The main challenge is that satisfiability problem for decision trees is polynomial-time solvable. Hence, we have to establish the hardness of this problem based on different ideas.

\section{Proof of Theorem \ref{fix}}
We reduce from the 1-in-$k$ exact hitting set problem (1-in-E$k$HS) problem. It is a version of the SAT problem where each clause contains exactly $k$ variables (without negations) and a clause is satisfied if and only if there is exactly one variable in it which is set to 1. We use the following result of Guruswami and Trevisan~\cite[Lemma 13]{guruswami2005complexity}:

\begin{theorem}    For every $\delta>0$ there exists $k$ such that it is NP-hard to distinguish satisfiable instances of 1-in-E$k$HS from instances of 1-in-E$k$HS for which it is impossible to satisfy more than a fraction $(1/e  + \delta)$ of clauses.
\end{theorem}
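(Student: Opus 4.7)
My plan is to attack this as a classical PCP-based inapproximability, following the long-code paradigm of H{\aa}stad. A first observation sets the target: for a single $1$-in-$k$ clause with positive literals, the best product-distribution random assignment sets each variable to $1$ independently with probability $p = 1/k$, satisfying the clause with probability $k(1/k)(1-1/k)^{k-1} = (1-1/k)^{k-1}$, which tends to $1/e$ as $k \to \infty$. So $1/e$ is the natural soundness floor for the problem, and the goal is to match it by a hardness reduction.

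The reduction would proceed in two steps. First, I would invoke Raz's parallel repetition theorem on a bipartite constraint graph derived from 3SAT to obtain a Label Cover instance with perfect completeness and arbitrarily small soundness $\eta = \eta(\delta, k)$. Second, for each Label Cover vertex $v$ with label set $[R]$, I would introduce $2^R$ long-code variables $\{x_{v,f}\}$, indexed by $f : [R] \to \{0,1\}$. For each Label Cover edge $(u,v)$ with projection $\pi : [R_v] \to [R_u]$, I would attach a bundle of $1$-in-$k$ clauses implementing a dictator test: draw the $k$ long-code queries from a carefully chosen product distribution over $\{0,1\}^{[R]}$ so that (completeness) a pair of dictators consistent with a satisfying Label Cover labeling satisfies every clause exactly, and (soundness) any far-from-dictator assignment satisfies at most a $(1/e + \delta)$-fraction of clauses. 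Completeness is a short check: given a satisfying labeling $\sigma$, set $x_{v, f} = f(\sigma(v))$ and verify by the choice of the query distribution that exactly one of the $k$ variables in each clause equals $1$.

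Soundness is where the real work lies. The plan is to expand the indicator functions of long-code variables in the Fourier basis tied to the query distribution, and write the expected fraction of satisfied clauses in terms of their Fourier coefficients. If this quantity exceeds $1/e + \delta$, a hypercontractive estimate, or more cleanly the Mossel--O'Donnell--Oleszkiewicz invariance principle translated to Gaussian space, should force a non-trivial low-degree, low-influence Fourier mass. From this mass one decodes a short list of candidate labels for each Label Cover vertex, and averaging over edges yields a labeling of value strictly greater than $\eta$, contradicting the Label Cover starting gap.

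The main obstacle will be sharpness: because $1/e$ is a tight random-assignment bound, the soundness analysis cannot afford any constant-factor loss. This is exactly the regime in which the invariance principle (or an equivalent careful hypercontractive computation) is essential, and designing the query distribution so that the Gaussian-space analogue of the dictator test achieves $1/e$ exactly is the key technical step. Once that gadget is in hand, the completeness/soundness split and the extraction of Label Cover labels proceed along a standard template.
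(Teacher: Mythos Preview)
The paper does not prove this theorem at all: it is quoted verbatim as Lemma~13 of Guruswami and Trevisan~\cite{guruswami2005complexity} and used as a black box in the reduction that follows. So there is no ``paper's own proof'' to compare against; you are volunteering an argument for a result the authors simply import.

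As for your sketch itself, the high-level template (Label Cover via parallel repetition, long-code variables, a dictator test whose random-assignment value tends to $1/e$, Fourier decoding in the soundness case) is the right genre, and your opening calculation correctly identifies why $1/e$ is the target. Two caveats are worth flagging. First, the instance is $1$-in-E$k$HS, i.e.\ all literals are \emph{positive}; standard long-code reductions lean on folding over negation to kill constant strategies, and you do not say how you handle the monotone constraint. Second, your completeness paragraph speaks of a ``product distribution'' on the $k$ queries, but to have \emph{exactly one} of $x_{v,f_1},\ldots,x_{v,f_k}$ equal to $1$ for every dictator simultaneously you need the coordinates of $f_1,\ldots,f_k$ to be perfectly correlated (exactly one $1$ per coordinate), which is not a product distribution; this is fixable but should be stated correctly. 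Finally, Guruswami and Trevisan's actual argument predates the invariance principle and proceeds by a direct Fourier analysis of a specific test rather than via a Gaussian-space comparison, so invoking MOO here is anachronistic relative to the cited source, though not mathematically wrong as an alternative route.
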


We fix any $k$ such that it is NP-hard to distinguish satisfiable instances of 1-in-E$k$HS from instances of 1-in-E$k$HS for which it is impossible to satisfy more than a fraction $1/2$ of clauses. Now, fix $\kappa > 0$. To establish Theorem \ref{fix}, we take any instance $\varphi$ of 1-in-E$k$HS with $n$ variables and $m$ clauses and in \emph{quasi-polynomial} time construct a decision tree $T$ over $N$ variables such that
\begin{itemize}
    \item if $\varphi$ is satisfiable, then there exists $y\in\{1,\bot\}^N$ such that $T(y)\ge 1 - \kappa$;
    \item if it is impossible to satisfy more than a fraction $1/2$ of clauses of $\varphi$, then there exists no $y\in\{1,\bot\}^N$ such that $T(y)\ge\kappa$.
\end{itemize}

Let $l$ be the smallest integral number such that $m\le 2^{2l}$. Note that $l = \Theta(\log m)$. The most technical part of the proof is to construct in polynomial time an $O(\log m)$-depth decision tree $L$ over $n + 2l + 2$ variables such that for some absolute constant $c > 0$ we have:
\begin{itemize}
    \item if $\varphi$ is satisfiable, then there exists $y\in\{1,\bot\}^{n+2l + 2}$ such that $L(y)\ge 7/8$;
    \item if it is impossible to satisfy more than a fraction $1/2$ of clauses of $\varphi$, then there for every $y\in\{1,\bot\}^{n+2l +2}$ we have $L(y)\le 7/8 - \delta$, where $\delta = \frac{c}{\sqrt{\ln(m)}}$.
\end{itemize}
We first describe how to ``increase the gap''  from  ($7/8 - \delta$ vs.~$7/8$) to ($\kappa$ vs.~$1 - \kappa$). The construction of $L$ is given afterwards.

\bigskip

\textbf{Finishing the proof modulo the construction of $L$.} We  define a decision tree $T$ as
\[K = \frac{2\ln(2/\kappa)}{\delta^2} = O(\log m)\] ``independent copies'' of $L$. More specifically, we let $T$ be over $N = K(n + 2l + 2)$ variables. First, $T$ runs $L$ on the first $n + 2l + 2$ variables, then on the second $n + 2l +2$ variables, and so on (in total, $K$ runs). In the end, it outputs 1 if and only if for at least a fraction $7/8 - \delta/2$ of runs the output of $L$ was 1.

Assume first that $\varphi$ is satisfiable. Then, by definition of $L$,
there exists $y\in\{1,\bot\}^{n + 2l + 1}$ such that $L(y)\ge 7/8$. Repeat it $K$ times to obtain a partial input $Y\in\{1,\bot\}^{N}$ for $T$. We claim that $T(Y)\ge 1 - \kappa$. Indeed, a random completion of $Y$ can be generated as $K$ independent samples of a random completion of $y$.  The  tree $T$ outputs 1 if and only if $L$ outputs 1 for at least a fraction $7/8 - \delta/2$ of the samples. On the other hand, the probability that $L$ outputs 1 on one of the samples is $L(y)\ge 7/8$. Hence, the average fraction of samples on which $L$ outputs 1 is at least $7/8$. By Hoeffding's inequality (Proposition \ref{hoe}), the probability that $T$ outputs 1 is at least $1 - \exp\{-2(\delta/2)^2 \cdot K\} = 1 - \kappa/2 > 1 - \kappa$.  

Now, assume that it is impossible to satisfy more than a fraction $1/2$ of clauses of $\varphi$. Then
$L(y) \le 7/8 - \delta$ for all $y\in\{1,\bot\}^{n + 2l + 2}$. We claim that $T(Y) < \kappa$ for every $Y\in\{1,\bot\}^{N}$. Let $y_1$ be a restriction of $Y$ to the first $n + 2l + 2$ variables, $y_2$ be a restriction of $Y$ to the second $n + 2l + 2$ variables, and so on. Generating a random completion of $Y$ is the same as independently sampling random completions of $y_1, y_2,\ldots, y_K$. For each $i\in\{1, \ldots, K\}$, the probability that $L$ outputs 1 on a random completion of $y_i$ is $L(y_i) \le 7/8 - \delta$.  Hence, the average fraction of $i\in\{1, \ldots, K\}$ such that $L$ outputs 1 on a random completion of $y_i$ is at most $7/8 - \delta$. In turn, $T(Y)$ is  the probability that the fraction of $i\in\{1, \ldots, K\}$ such that $L$ outputs 1 on a random completion of $y_i$ is at least $7/8$. Hence, by  Hoeffding's inequality, $T(Y)$ is bounded by  $\exp\{-2(\delta/2)^2 \cdot K\} = \kappa/2 < \kappa$.

The depth of $T$ is $\text{(depth of $L$)}\times K = O(\log^2 m)$. Hence, the size of $T$ is quasi-polynomial. The time to  construct $T$ from $L$ is also quasi-polynomial because it is polynomial in the size of $T$.

\bigskip

\textbf{Construction of $L$.} Let $x_1,\ldots, x_n$ denote variables of $\varphi$. The variables of $L$ will be denoted by $x_1, \ldots, x_n, y_1, \ldots, y_{2l + 1}, z$.

First, $L$ asks the values of $y_1, \ldots, y_{2l + 1}$. Now, we will call a binary word ``fat'' if it has more 1s than 0s, and ```thin'' otherwise.  If $y_1\ldots  y_{2l +1}$ is thin, $L$ outputs 1. To describe what $L$ does when $y_1\ldots  y_{2k +1}$ is fat, we first notice that in $\{0, 1\}^{2k + 1}$ exactly half of the words are fat and exactly half are thin. In particular, the number of fat words is $2^{2l}$. We fix any surjective mapping from the set of fat words in $\{0, 1\}^{2l + 1}$ to the set of clauses of $\varphi$. Such mapping exists because $m\le 2^{2l}$ by our choice of $l$. 

So, when $y_1\ldots  y_{2l +1}$ is fat, $L$ takes the  clause $C$, assigned to $y_1\ldots  y_{2l +1}$, and runs a decision tree $L_C$, defined as follows.
If $C$ is over variables $x_{i_1}, \ldots, x_{i_k}$, then $L_C$ is over  $x_{i_1}, \ldots, x_{i_k}$ and $z$.    The tree $L_C$ asks the values of all its variables. It outputs 1 if and only if either there is exactly one 0 among $x_{i_1}, \ldots, x_{i_k}$ or ($x_{i_1} = \ldots = x_{i_k} = 1$ and $z = 0$).

The depth of $L$ is $2l + 1 + k + 1 = O(l) = O(\log m)$, and it can be constructed from $\varphi$ in polynomial time.

\begin{lemma}
\label{cases}
    Consider any partial input $p$ to $L_C$ in which no coordinate is equal to 0 (i.e, we can only have variables that are fixed to 1 and undefined variables).  We call $p$ good if $z$ is undefined  in $p$ and there is also    exactly one undefined variable among $x_{i_1}, \ldots, x_{i_k}$ in $p$. Otherwise, we call $p$ bad. 

Then,
    if $p$ is good, we have $L_C(p) = 3/4$, and if $p$ is bad, we have $L_C(p) \le 5/8$.
    
\end{lemma}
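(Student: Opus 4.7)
My plan is to parametrize $p$ by $u$, the number of coordinates among $x_{i_1},\ldots,x_{i_k}$ that are undefined in $p$ (the remaining $k-u$ coordinates are fixed to $1$, by the assumption that $p$ contains no $0$s), and by whether $z$ is $\bot$ or fixed to $1$. The accepting condition of $L_C$ is the disjoint union of two events on a random completion $\mathbf z$: $(E_1)$ exactly one of $\mathbf z_{i_1},\ldots,\mathbf z_{i_k}$ equals $0$, and $(E_2)$ all of $\mathbf z_{i_1},\ldots,\mathbf z_{i_k}$ equal $1$ and $\mathbf z=0$ (here I reuse $\mathbf z$ for the completion of $z$). So $L_C(p)=\Pr[E_1]+\Pr[E_2]$, and each term is easy to compute in terms of $u$ and the status of $z$, since the $k-u$ coordinates fixed to $1$ contribute no randomness.

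For the good case ($u=1$ and $z=\bot$), the single undefined $x$-variable is $0$ with probability $1/2$ (making $E_1$ occur) and is $1$ with probability $1/2$, in which case $E_2$ occurs with conditional probability $1/2$ on $z$. Hence $L_C(p)=1/2+1/4=3/4$.

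For the bad case I split into two regimes. If $z$ is fixed to $1$, then $\Pr[E_2]=0$ and $\Pr[E_1]=u/2^u$ (pick which one of the $u$ undefined variables is $0$ and force the others to $1$), with the convention that this is $0$ when $u=0$; since $u/2^u\le 1/2$ for all $u\ge 0$, we get $L_C(p)\le 1/2\le 5/8$. If instead $z=\bot$ but $u\neq 1$, then $u=0$ or $u\ge 2$. For $u=0$ we have $\Pr[E_1]=0$ and $\Pr[E_2]=1/2$. For $u\ge 2$ we have $\Pr[E_1]=u/2^u$ and $\Pr[E_2]=1/2^{u+1}$, so $L_C(p)=u/2^u+1/2^{u+1}$; this expression is decreasing in $u$ for $u\ge 2$ (elementary check) and equals exactly $5/8$ at $u=2$, which gives the bound.

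The only mildly subtle point is verifying monotonicity for $u\ge 2$ and noticing that equality $L_C(p)=5/8$ is actually attained (at $u=2$, $z=\bot$), which is why the stated bound in the lemma cannot be improved; everything else is a one-line calculation once the event decomposition is set up.
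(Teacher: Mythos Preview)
Your proof is correct and essentially identical to the paper's: both parametrize by the number of undefined $x$-variables and the status of $z$, decompose the accepting event into the two disjoint cases, and bound the resulting expression $u/2^u + 1/2^{u+1}$ by its value at $u=2$. The only cosmetic difference is that you organize the bad case by first branching on whether $z$ is fixed, whereas the paper first branches on whether the number of undefined $x$-variables is at least $2$.
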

\begin{proof}
Assume first that $p$ is good. Let  $x_{i_j}$ be a variable among $x_{i_1}, \ldots, x_{i_k}$ which is undefined  in $p$. 
Consider a random completion of $p$.
If $x_{i_j}$ is $0$ in this completion, then there  is exactly one 0  among $x_{i_1}, \ldots, x_{i_k}$, and hence $L_C$ outputs 1. If $x_{i_j}$ is 1, $L_C$ outputs 1 with probability $1/2$, depending on whether $z = 0$. Overall, we get $L_C(p) = (1/2)\cdot 1 + (1/2)\cdot (1/2) = 3/4$.

    Now, consider the case when $p$ is bad. 
    Assume first that there are $t\ge 2$ undefined variables among $x_{i_1}, \ldots, x_{i_k}$ in $p$. Consider a random completion of $p$. By definition, $L_C$ outputs 1 on it only in the following two cases: (a) there is exactly 1 undefined variable among $x_{i_1}, \ldots, x_{i_k}$ which is equal to 0 in our random completion; (b) all undefined variables  among $x_{i_1}, \ldots, x_{i_k}$ are equal to 1 and $z = 0$ in our random completion. The probability of (a) is $t2^{-t}$. The probability of (b) is $2^{-t - 1}$ if $z$ is undefined and $0$ otherwise. Overall, we get $L_C(p) \le t2^{-t} + 2^{-t - 1}$. This expression decreases in $t$, and for $t = 2$ it is equal to $5/8$.

    Now, assume that the number of undefined variables among $x_{i_1}, \ldots, x_{i_k}$ is at most 1. If it is 1, then $z$ has to be fixed to 1 in $p$ (otherwise, $p$ is good). Then $L_C$ can only output 1 if the unique  undefined variable among $x_{i_1}, \ldots, x_{i_k}$ is 0, and this happens with probability $1/2$ in a random completion of $p$. That is, in this case, $L_C(p) = 1/2$. Now, if all variables $x_{i_1}, \ldots, x_{i_k}$ are fixed to 1s in $p$, then $L_C$ outputs 1 only if $z = 0$. This either has probability $1/2$ (if $z$ is undefined) or $0$ (if $z$ is fixed to 1). Hence, in this case, $L_C(p) \le 1/2$.
\end{proof}

Assume first that $\varphi$ is satisfiable. Fix any satisfying assignment $\alpha$ to $\varphi$. We construct a partial input $p\in\{1, \bot\}^{n + 2l + 2}$ to $L$ such that $L(p) = 7/8$. We define  $p$ on $x_1, \ldots, x_n$  as follows:
\begin{align}
\label{0rule}
    \mbox{$x_i = 0$ in $\alpha$} &\implies \mbox{$x_i = 1$  in $p$}\\
    \label{1rule}
    \mbox{$x_i = 1$ in $\alpha$} &\implies \mbox{$x_i = \bot$ in $p$}
\end{align}
Variables $y_1, \ldots, y_{2l+1}, z$ are undefined in $p$.
Since $\alpha$ is a satisfying assignment to $\varphi$, for every clause $C$ there is exactly one variable in $C$ which is undefined in $p$. This means that the restriction of $p$ to variables of $L_C$ is good in the sense of Lemma \ref{cases}, for every clause of $C$. Thus, conditioned on the event that $y_1\ldots y_{2l +1}$ is fat, $L$ outputs 1 with probability $3/4$ on a random completion of $p$. Now, variables $y_1, \ldots, y_{2l + 1}$ are undefined in $p$, which means that they are all sampled independently uniformly at random in our completion. The probability of the event  ``$y_1\ldots y_{2l +1}$ is fat'' is $1/2$, which contributes $(1/2)\cdot (3/4) = 3/8$ to $L(p)$. In turn,  conditioned on the event  ``$y_1\ldots y_{2l +1}$ is thin'' (which also happens with probability $1/2$), our tree always outputs 1. Overall, we get $L(p) = 3/8 + 1/2 = 7/8$.

We now show that if it is impossible to satisfy more than a half of clauses of $\varphi$, then for every $p\in\{1, \bot\}^{n + 2l + 2}$ we have $L(p) \le 7/8 - \delta$, where $\delta = c/\sqrt{\ln(m)}$ and $c > 0$ is some absolute constant. Again, we do so by taking a random completion of $p$ and showing that the probability that $L$ outputs 1 on it is at most $7/8 - \delta$.

Assume first that $y_1, \ldots, y_{2l+1}$ are all undefined in $p$. We turn $p$ into an assignment $\alpha$ to $\varphi$ by reversing (\ref{0rule}--\ref{1rule}):
\begin{align*}
     \mbox{$x_i = 1$ in $p$} &\implies \mbox{$x_i = 0$ in $\alpha$} \\
     \mbox{$x_i = \bot$ in $p$} &\implies  \mbox{$x_i = 1$ in $\alpha$}
\end{align*}
Observe that if $\alpha$ does not satisfy a clause $C$, then the restriction of $p$ to variables of $L_C$ is bad in the sense of Lemma \ref{cases}.
Now, the number of unsatisfied clauses is at least $m/2$. Each of these clauses is assigned to some fat $y_1\ldots y_{2l +1}$. I.e., for at least $m/2$ fat $y_1\ldots y_{2k +1}$, the probability that $L$ outputs 1 (conditioned on this fixation of $y_1\ldots y_{2k +1}$) is at most $5/8$. For any other fat  $y_1\ldots y_{2k +1}$, this conditional probability  is at most $3/4$.
Overall, the contribution of fat $y_1\ldots y_{2k +1}$ to $L(p)$ does not exceed 
\[\frac{m/2}{2^{2l + 1}}\cdot (5/8) +\frac{2^{2l} - m/2}{2^{2l + 1}}\cdot (3/4), \]
and the contribution of thin  $y_1\ldots y_{2k +1}$ to $L(p)$ is, as before, $1/2$. Overall, we get
\[L(p) \le \frac{m/2}{2^{2l + 1}}\cdot (5/8) +\frac{2^{2l} - m/2}{2^{2l + 1}}\cdot (3/4) + 1/2 = 7/8 - \frac{m}{2^{2l + 5}}.\]
It remains to recall that $l$ was chosen as the smallest integral number such that $m\le 2^{2l}$. This means that $m\ge 2^{2(l-1)}$. This gives us $L(p) \le 7/8 - 1/128 < 7/8 - \delta$.

It remains to consider the case when in $p$ at least one variable among $y_1, \ldots, y_{2l + 1}$ is fixed to 1. As before, the probability that $L$ outputs 1, conditioned on the event ``$y_1, \ldots, y_{2k + 1}$ is fat'', is at most $3/4$. And this probability is 1 conditioned on the event ``$y_1, \ldots, y_{2k + 1}$ is thin''. Now, however, since at least one variable among $y_1, \ldots, y_{2k + 1}$ is fixed to 1, the probability to be thin is slightly less than $1/2$, which makes $L(p)$ slightly less than $7/8$. More specifically, 
\begin{align*}
L(p) &\le \Pr[y_1\ldots y_{2l + 1} \mbox{ is thin}] + (3/4)\Pr[y_1\ldots y_{2l + 1} \mbox{ is fat}] \\
&= 1 - (1/4)\Pr[y_1\ldots y_{2l + 1} \mbox{ is fat}], 
\end{align*}
where $y_1,\ldots, y_{2l +1}$ are chosen from a random completion of $p$. That is, now our goal is to lower bound $\Pr[y_1\ldots y_{2l + 1} \mbox{ is fat}]$, assuming that at least one of $y_1,\ldots, y_{2l + 1}$ is fixed to 1, and the rest of them are chosen independently uniformly at random. The more variables we fix to 1, the larger becomes $\Pr[y_1\ldots y_{2l + 1} \mbox{ is fat}]$. Hence, w.l.o.g.~we may assume that exactly one variable is fixed to 1, say, $y_{2l +1}$. Then $y_1\ldots y_{2l + 1}$ is fat if and only if there are at least $l$ ones among $y_1, \ldots, y_{2l}$. Thus, we get
\begin{align*}
   \Pr[y_1\ldots y_{2l + 1} \mbox{ is fat}]&\ge \frac{\binom{2l}{2l} + \ldots + \binom{2l}{l+1} +\binom{2l}{l} }{2^{2l}} \\
   &= \frac{\frac{\binom{2l}{2l} + \binom{2l}{0}}{2} + \ldots + \frac{\binom{2l}{l+1} + \binom{2l}{l - 1}}{2} +\frac{\binom{2l}{l}+\binom{2l}{l}}{2}}{2^{2l}}\\
   &= \frac{\binom{2l}{2l} + \ldots + \binom{2l}{0}}{2\cdot 2^{2l}} + \frac{\binom{2l}{l}}{2\cdot 2^{2l}} = \frac{1}{2} + \Omega(1/\sqrt{l})\\
   &= \frac{1}{2} + \Omega(1/\sqrt{\ln m}).
\end{align*}
From this, we get $L(p) \le 1 - (1/4)\cdot (1/2 + \Omega(1/\sqrt{\ln m})) = 7/8 - \Omega(1/\sqrt{\ln m})$, as required.

\bibliographystyle{acm}
\bibliography{ref}

\begin{thebibliography}{10}

\bibitem{arenas2022computing}
{\sc Arenas, M., Barcel{\'o}, P., Romero~Orth, M., and Subercaseaux, B.}
\newblock On computing probabilistic explanations for decision trees.
\newblock {\em Advances in Neural Information Processing Systems 35\/} (2022),
  28695--28707.

\bibitem{barcelo2020model}
{\sc Barcel{\'o}, P., Monet, M., P{\'e}rez, J., and Subercaseaux, B.}
\newblock Model interpretability through the lens of computational complexity.
\newblock {\em Advances in neural information processing systems 33\/} (2020),
  15487--15498.

\bibitem{guruswami2005complexity}
{\sc Guruswami, V., and Trevisan, L.}
\newblock The complexity of making unique choices: approximating 1-in-k sat.
\newblock In {\em Proceedings of the 8th international workshop on
  Approximation, Randomization and Combinatorial Optimization Problems, and
  Proceedings of the 9th international conference on Randamization and
  Computation: algorithms and techniques\/} (2005), pp.~99--110.

\bibitem{hoeffding1963probability}
{\sc Hoeffding, W.}
\newblock Probability inequalities for sums of bounded random variables.
\newblock {\em Journal of the American Statistical Association 58}, 301 (1963),
  13--30.

\bibitem{ignatiev2021sat}
{\sc Ignatiev, A., and Marques-Silva, J.}
\newblock Sat-based rigorous explanations for decision lists.
\newblock In {\em Theory and Applications of Satisfiability Testing--SAT 2021:
  24th International Conference, Barcelona, Spain, July 5-9, 2021, Proceedings
  24\/} (2021), Springer, pp.~251--269.

\bibitem{lundberg2017unified}
{\sc Lundberg, S.~M., and Lee, S.-I.}
\newblock A unified approach to interpreting model predictions.
\newblock {\em Advances in neural information processing systems 30\/} (2017).

\bibitem{ribeiro2018anchors}
{\sc Ribeiro, M.~T., Singh, S., and Guestrin, C.}
\newblock Anchors: High-precision model-agnostic explanations.
\newblock In {\em Proceedings of the AAAI conference on artificial
  intelligence\/} (2018), vol.~32.

\bibitem{shih2018symbolic}
{\sc Shih, A., Choi, A., and Darwiche, A.}
\newblock A symbolic approach to explaining bayesian network classifiers.
\newblock In {\em Proceedings of the 27th International Joint Conference on
  Artificial Intelligence\/} (2018), pp.~5103--5111.

\bibitem{waldchen2021computational}
{\sc W{\"a}ldchen, S., Macdonald, J., Hauch, S., and Kutyniok, G.}
\newblock The computational complexity of understanding binary classifier
  decisions.
\newblock {\em Journal of Artificial Intelligence Research 70\/} (2021),
  351--387.

\bibitem{yan2021if}
{\sc Yan, T., and Procaccia, A.~D.}
\newblock If you like shapley then you’ll love the core.
\newblock In {\em Proceedings of the AAAI Conference on Artificial
  Intelligence\/} (2021), vol.~35, pp.~5751--5759.

\end{thebibliography}

\end{document}